\theoremstyle{mytheoremstyle}
\newtheorem{theorem}{Theorem}
\theoremstyle{mytheoremstyle}
\newtheorem{lemma}[theorem]{Lemma}
\theoremstyle{mytheoremstyle}
\newtheorem{prop}{Proposition}
\theoremstyle{mytheoremstyle}
\theoremstyle{theorem}
\theoremstyle{definition}
\newtheorem{definition}{Definition}
\theoremstyle{remark}
\theoremstyle{theorem}
\DeclareMathOperator{\id}{id}
\DeclareMathOperator{\dom}{Dom}
\def\bs{\boldsymbol}
\def\d{\mathrm{d}}
\def\i{\mathrm{i}}
\title{Quantum Graphs:\\  $ \mathcal{PT}$-symmetry and reflection symmetry of the spectrum}
\author[1]{P. Kurasov}
\author[1,2,3]{B. Majidzadeh Garjani}
\affil[1]{Department of Mathematics, Stockholm University, Sweden}
\affil[2]{Department of Physics, Stockholm University, SE-106 91 Stockholm, Sweden}
\affil[3]{Nordita, Royal Institute of Technology and Stockholm University, Roslagstullsbacken 23, SE-10691 Stockholm, Sweden}
\newcommand\raisepunct[1]{\mathpunct{\raisebox{0.5ex}{#1}}}
\begin{document}
\maketitle

\thispagestyle{empty}

\begin{abstract}
Not necessarily self-adjoint quantum graphs -- differential operators on metric graphs -- are considered. Assume in addition that the underlying metric 
graph possesses an automorphism (symmetry) $ \mathcal P $. If the differential operator is $ \mathcal P \mathcal T$-symmetric, then its spectrum has reflection symmetry
with respect to the real line. Our goal is to understand whether the opposite statement holds, namely whether the reflection symmetry of the spectrum of
a quantum graph implies that the underlying metric graph possesses a non-trivial automorphism and the differential operator is $ \mathcal P \mathcal T$-symmetric.
We give partial answer to this question by considering equilateral star-graphs. The corresponding Laplace operator with Robin vertex conditions possesses reflection-symmetric spectrum if and only if the operator is $ \mathcal P \mathcal T$-symmetric with $ \mathcal P $ being an automorphism of the metric graph.%

\end{abstract}

\section{Introduction}
Our paper is devoted to spectral theory of quantum graphs -- ordinary differential operators on metric graphs. This is one of the most rapidly growing areas of modern mathematical physics
due to its important applications in physics and applied sciences as well as interesting mathematical problems that emerge \cite{kuchment,gutkin,wires,kurasovschrodinger,kurasovBirk}.
It appeared that symmetries of the underlying metric graphs play a very important role in constructing counterexamples and proving spectral estimates
\cite{parzan,boman,pavel2015,kurasov2014rayleigh,post2012}. If the theory of self-adjoint operators on metric graphs is
rather well-understood, the corresponding theory of non-self-adjoint operators is in its incubatory stage \cite{hussein}. The main subject of our studies is precisely non-self-adjoint quantum graphs.

The main goal of our current paper is to understand connections to the
theory of $ \mathcal P \mathcal T$-symmetric operators -- yet another area of mathematical physics that has got a lot of attention recently
\cite{MR1627442,MR1686605,MR3035410,MR1958251,MR3485039}.
 Standard quantum mechanics in one dimension is described
by self-adjoint differential operators leading to purely real spectrum. Extending the set of allowed operators by including  $ \mathcal P \mathcal T$-symmetric ones leads to the spectrum with
reflection symmetry with respect to the real axis, not only as a set but also including multiplicities. (In the classical studies $ \mathcal P $ is the reflection operator $ (\mathcal P f)(x) = f(-x) $ and
$ \mathcal T $ is the time-reversal operator of complex conjugation $ (\mathcal T f) (x) = \overline{f(x)}.$) If a metric graph possesses a certain automorphism 
(symmetry) $ \mathcal P $, then the corresponding differential
operator can be chosen to have  $ \mathcal P \mathcal T$-symmetry, leading to reflection-symmetric spectrum. We would like to understand whether this mechanism is unavoidable for a
quantum graph to have a reflection-symmetric spectrum. 

We consider the case of an equilateral star-graph $\Gamma$, as the one in Figure~\ref{fig:stargraph},  formed by  $ N $ identical edges joined  at the central vertex, together with the Laplace operator acting
on it. The metric graph $ \Gamma $ has a rich symmetry group generated by the permutations of the edges.
If, the so-called standard vertex conditions (continuity of the function and vanishing of the sum of normal derivatives), are introduced, then the corresponding operator is self-adjoint
and the spectrum is real (an infinite set of discrete eigenvalues tending to $+ \infty$). One may break the self-adjointness by introducing Robin conditions with non-real parameters at the degree-one
vertices. It is relatively easy to see that if the set of Robin parameters is invariant under conjugation, then the corresponding Laplace operator is  $ \mathcal P \mathcal T$-symmetric with respect
to a certain automorphism
 $ \mathcal P $ of the underlying metric graph $ \Gamma. $ Then the spectrum possesses reflection symmetry with respect to the real axis. Our main question is whether the opposite
statement holds, namely, whether the reflection symmetry of the spectrum implies  $ \mathcal P \mathcal T$-symmetry of the operator with respect to a certain 
automorphism $ \mathcal P $ 
of the metric graph $ \Gamma $. For an operator with discrete spectrum and eigenfunctions of the operator and its adjoint building a biorthogonal basis, one may easily construct a symmetry
operator in the Hilbert space, provided the spectrum is reflection symmetric. Such symmetry operator is defined using the eigenfunctions and there is no guarantee that it comes from an
automorphism of the metric graph.
Our main goal is to prove the following theorem:

\begin{theorem}[Main Theorem]\label{thm:maintheorm}
Let  $ L_{\bs{h}} $ be the Laplace operator acting on the equilateral star-graph $ \Gamma $ with the domain given by standard vertex conditions at the internal vertex and
Robin conditions including complex valued parameters $ h_i $  at the degree-one vertices. For the spectrum of the operator $  L_{\bs{h}} $ to possess reflection symmetry
with respect to the real line it is necessary and sufficient that the quantum graph $ L_{\bs{h}} $ is $ \mathcal{PT}$-symmetric, where $ \mathcal P$ is a certain 
automorphism of the
underlying metric star-graph $ \Gamma.$
\end{theorem}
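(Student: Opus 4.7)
The plan is to encode the spectrum of $L_{\bs h}$ as the zero set (with multiplicities) of a single entire function $P(k;\bs h)$ in the spectral variable $k$ (with $\lambda=k^2$), and then to exploit the fact that $P$ depends on $\bs h$ only through the elementary symmetric polynomials of the $h_i$.

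First I would derive the secular equation. Writing an eigenfunction on edge $i$ as $a_i\cos(kx)+b_i\sin(kx)$ with $x$ measured from the central vertex, the standard conditions force $a_i=c$ for a common value $c$ and $\sum_i b_i=0$, while the Robin conditions at the pendants give $b_i = c\,\tfrac{k\sin(k\ell)-h_i\cos(k\ell)}{k\cos(k\ell)+h_i\sin(k\ell)}$. Clearing denominators yields the entire function
\begin{equation*}
P(k;\bs h) := \sum_{i=1}^N \bigl(k\sin(k\ell)-h_i\cos(k\ell)\bigr)\prod_{j\neq i}\bigl(k\cos(k\ell)+h_j\sin(k\ell)\bigr),
\end{equation*}
and a short case analysis (treating both the generic $c\neq 0$ branch and the Dirichlet-type branch $c=0$ that appears when some $h_i$ coincide) shows that the zeros of $P$ reproduce the spectrum of $L_{\bs h}$ with correct algebraic multiplicities. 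Expanding the sum, $P(k;\bs h)$ is linear in each of the elementary symmetric polynomials $e_1(\bs h),\dots,e_N(\bs h)$, with coefficients that are linearly independent entire functions of $k$. Consequently the identity $P(\cdot;\bs h)\equiv P(\cdot;\bs h')$ forces $e_m(\bs h)=e_m(\bs h')$ for every $m$, so $\bs h'$ must be a permutation of $\bs h$.

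The main obstacle, and the heart of the argument, will be to upgrade reflection symmetry of the spectrum to the identity $P(\cdot;\bs h)\equiv P(\cdot;\bar{\bs h})$ of entire functions, rather than mere coincidence of zero sets. Since the coefficients of $P$ are real polynomials in $k$ and in the parameters, one has $\overline{P(k;\bs h)}=P(\bar k;\bar{\bs h})$, so spectral reflection symmetry translates into the two entire functions sharing their zeros with multiplicities. Both are of finite exponential type, hence of order one, so by Hadamard's factorization theorem they differ by a factor $e^{ak+b}$. To pin down $a$ and $b$ I would examine the asymptotics along $k=\mathrm{i} y$ as $y\to\pm\infty$: using $\cos(\mathrm{i} y\ell)\sim \tfrac12 e^{y\ell}$ and $\tan(\mathrm{i} y\ell)\to\mathrm{i}$, one computes that the leading contribution to $P(\mathrm{i} y;\bs h)$ is of order $y^N e^{N\ell y}$ with a prefactor independent of $\bs h$; matching this asymptotic on both sides forces $a=0$ and $e^b=1$, yielding $P(\cdot;\bs h)\equiv P(\cdot;\bar{\bs h})$.

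With this identity in hand, the injectivity result of the second paragraph produces a permutation $\pi\in S_N$ with $h_{\pi(i)}=\bar h_i$, and because the edges of $\Gamma$ all have the same length $\pi$ extends to an automorphism $\mathcal P$ of $\Gamma$. A short check of the vertex conditions confirms that $L_{\bs h}$ commutes with $\mathcal{PT}$, completing the necessity direction. The sufficiency direction is the classical observation that an antilinear symmetry commuting with the operator maps the generalized eigenspace at $\lambda$ bijectively onto the one at $\bar\lambda$.
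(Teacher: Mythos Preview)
Your route is genuinely different from the paper's, and largely sound, so let me compare the two and then flag one real gap.

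\textbf{How the paper argues.} The paper never tries to identify the two secular functions globally. Instead it isolates the simple eigenvalues of $L_{\bs h}$ lying near $n\uppi$ for large $n$, proves they admit an asymptotic expansion $\widetilde z_n(\bs h)=n\uppi+\sum_{j\ \text{odd}}a_j(\bs h)/n^{j}$, and shows by a recursive computation that $a_{2i-1}(\bs h)$ depends on $s_1(\bs h),\dots,s_i(\bs h)$ with $s_i$ entering linearly for the first time. Reflection symmetry forces $a_j(\bs h)=a_j(\bar{\bs h})$, so an induction gives $s_i(\bs h)\in\mathbb R$ for all $i$, and then your permutation argument finishes. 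Your Hadamard route replaces this inductive bootstrap by a single global step: the identity $P(\cdot;\bs h)\equiv P(\cdot;\bar{\bs h})$, after which reading off $e_m(\bs h)=e_m(\bar{\bs h})$ is immediate. This is cleaner and avoids the recursive coefficient analysis.

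\textbf{The gap: multiplicities.} Hadamard factorisation requires that the two entire functions share their zeros \emph{with multiplicities}, and you obtain this by (i) asserting that the order of each zero of $P$ equals the algebraic multiplicity of the corresponding eigenvalue, and (ii) assuming that ``reflection symmetry of the spectrum'' already includes equality of multiplicities. Neither is free. Point (i) is true for this boundary value problem but is not ``a short case analysis''; it is a Keldysh-type statement about characteristic determinants of non-self-adjoint pencils. Point (ii) is an extra hypothesis: the paper explicitly remarks that its proof does \emph{not} use equality of multiplicities in conjugate pairs, because it works only with the sequence of simple eigenvalues near $n\uppi$. So as written, your argument proves a slightly weaker theorem than the paper's. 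If you want the full strength, you could salvage Hadamard by showing that all but finitely many zeros of $P(\cdot;\bs h)$ are simple and agree with those of $P(\cdot;\bar{\bs h})$ (one near each $n\uppi$, and $N-1$ near each $(n+\tfrac12)\uppi$, counted with multiplicity; since the total count in each cluster is the same for $\bs h$ and $\bar{\bs h}$ and the zero sets agree, the multiplicities match there), leaving only a finite discrepancy which the imaginary-axis asymptotics then rules out.

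\textbf{Two smaller points.} First, the linear independence of the coefficient functions $\phi_m(k)$ multiplying $e_m(\bs h)$ is asserted but not argued; it follows quickly from the asymptotics along $k=\i y$, where $\phi_m(\i y)$ carries the distinct power $y^{N-m}$ in front of the common factor $e^{N\ell y}$. Second, be careful with the sign of the Robin condition in your parametrisation from the central vertex; the paper's convention $u_j'(0)=h_ju_j(0)$ is written with the pendant at $x=0$, so after the change of variable the sign in your expression for $b_i$ should be double-checked.
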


We doubt that the theorem holds for arbitrary metric graphs, but it might be interesting to characterize the class of graphs and vertex conditions for which a similar
statement holds at least for the Laplace operator. In the case under consideration the metric graph $ \Gamma $ is an equilateral star-graph and the group of its automorphisms is
clear. Moreover, we do not consider all possible differential operators on $ \Gamma $ but just the family $ L_{\bs{h}} $ containing a finite
number of (Robin) parameters. Therefore, we are going to prove first a slightly different theorem and obtain the Main Theorem \ref{thm:maintheorm} as
its corollary:

\begin{theorem}\label{thm:2}
Let  $ L_{\bs{h}} $ be the Laplace operator acting on the equilateral star-graph $ \Gamma $ with the domain given by standard vertex conditions at the internal vertex and
Robin conditions including complex valued parameters $ h_i $  at the degree-one vertices. For the spectrum of the operator $  L_{\bs{h}} $ to possess reflection symmetry
with respect to the real line it is necessary and sufficient that the  set of Robin parameters $ \{ h_i \}_{i=1}^N $
 is invariant under conjugation.
\end{theorem}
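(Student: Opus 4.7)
The sufficiency direction is a direct construction. If the multiset $\{h_i\}_{i=1}^N$ is invariant under complex conjugation, fix a permutation $\sigma$ of $\{1,\ldots,N\}$ with $h_{\sigma(i)}=\overline{h_i}$ and let $\mathcal P$ be the automorphism of $\Gamma$ permuting the edges according to $\sigma$. The standard conditions at the central vertex are preserved by every edge permutation, while $\mathcal{PT}$ sends the Robin condition $\psi'(v_i)+h_i\psi(v_i)=0$ at leaf $v_i$ to $\psi'(v_{\sigma(i)})+\overline{h_i}\psi(v_{\sigma(i)})=0$, which is exactly the original condition at $v_{\sigma(i)}$ because $h_{\sigma(i)}=\overline{h_i}$. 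Hence $\mathcal{PT}\,L_{\bs h}\,\mathcal{PT}^{-1}=L_{\bs h}$ and the spectrum is invariant under complex conjugation together with multiplicities.

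For the necessity direction my plan is to analyse the secular equation directly. Parametrising each edge from the centre (at $0$) to the leaf (at $\ell$) and writing the eigenfunction on edge $i$ as $\psi_i(x)=A\cos(kx)+B_i\sin(kx)$, continuity at the centre fixes a common $A$, the Kirchhoff condition reads $\sum_i B_i=0$, and the Robin conditions at the leaves eliminate each $B_i$ in terms of $A$; after clearing denominators one arrives at
\begin{equation*}
F(k)=\sum_{i=1}^N\bigl(k\sin(k\ell)-h_i\cos(k\ell)\bigr)\prod_{j\neq i}\bigl(k\cos(k\ell)+h_j\sin(k\ell)\bigr).
\end{equation*}
Being symmetric in $h_1,\ldots,h_N$, $F$ depends on $\bs h$ only through the elementary symmetric polynomials $e_m=e_m(\bs h)$; a short rearrangement (using identities such as $\sum_i h_i\,e_m(h_1,\ldots,\widehat{h_i},\ldots,h_N)=(m+1)\,e_{m+1}(\bs h)$) shows that
\begin{equation*}
F(k)=\sum_{m=0}^N e_m(\bs h)\,G_m(k),\qquad G_m(k)=k^{N-m}\sin^{m-1}(k\ell)\cos^{N-m}(k\ell)\bigl(N\sin^2(k\ell)-m\bigr),
\end{equation*}
where each $G_m$ is entire in $k$ and independent of $\bs h$ (the apparent singularity at $m=0$ cancels, giving $G_0(k)=Nk^N\sin(k\ell)\cos^N(k\ell)$).

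The spectrum of $L_{\bs h}$, apart from "Dirichlet" eigenvalues carried by eigenfunctions vanishing at the centre (which arise only when several $h_i$ coincide, from factors $k\cos(k\ell)+h_i\sin(k\ell)$ that were cleared to produce $F$, and which will be tracked separately), is the multiset of squares $k^2$ of zeros of $F$. Since each $G_m$ is a polynomial in $k$, $\sin(k\ell)$, $\cos(k\ell)$ with real coefficients, $\overline{F(\overline k)}=\sum_m\overline{e_m(\bs h)}\,G_m(k)=:\widetilde F(k)$, so reflection symmetry of the spectrum is equivalent to $F$ and $\widetilde F$ having identical zero sets with multiplicities. A direct computation shows that the coefficient of $e^{i(N+1)k\ell}$ in $F$ equals $\tfrac{-iN}{2^{N+1}}\prod_j(k-ih_j)$, a nonzero polynomial in $k$ of exact degree $N$, while the corresponding coefficient in $\widetilde F$ is obtained by replacing $h_j$ with $\overline{h_j}$. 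Thus $F$ and $\widetilde F$ are entire functions of exponential type $(N+1)\ell$ with coincident indicator diagrams $(N+1)\ell\,|\sin\theta|$. By Hadamard factorisation $F/\widetilde F=e^{\alpha k+\beta}$, and equality of indicators forces $\alpha=0$, so $F=c\,\widetilde F$ for a constant $c$. Matching the leading coefficients of the two degree-$N$ polynomials $\prod_j(k-ih_j)$ and $c\prod_j(k-i\overline{h_j})$ forces $c=1$ and $e_m(\bs h)=\overline{e_m(\bs h)}$ for every $m$, i.e.\ $\prod_i(x-h_i)\in\mathbb R[x]$, which is precisely the statement that $\{h_i\}_{i=1}^N$ is closed under complex conjugation.

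The step I expect to be the main obstacle is the entire-function argument of the previous paragraph: one must verify rigorously that the leading exponentials $e^{\pm i(N+1)k\ell}$ really appear in $F$ with the claimed nonzero polynomial coefficients (so that the indicator computation is valid and $F/\widetilde F$ is of order at most one), and one must carefully integrate the "Dirichlet" eigenvalues into the zero-count. These extra eigenvalues occur only when the multiset $\{h_i\}$ has repeated elements, and their multiplicities are governed by the sizes of the coincidence blocks; they automatically live in conjugation-invariant subsets whenever the coinciding Robin parameters already do, so they do not genuinely obstruct the conclusion, but reconciling them cleanly with the zeros of $F$ versus $\widetilde F$ is the most delicate bookkeeping of the proof.
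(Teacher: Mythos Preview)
Your sufficiency argument coincides with the paper's. For necessity you take a genuinely different route. The paper expands $D_{\bs h}(z)$ in elementary symmetric polynomials just as you do, but then studies a single analytic branch of roots $\widetilde z_n(\bs h)=n\uppi+\mathit\Delta_n(\bs h)$ near the simple zeros $n\uppi$ of $D_{\bs 0}$, shows $\mathit\Delta_n$ has an odd power series in $1/n$ whose coefficients $a_{2i-1}(\bs h)$ depend polynomially on $s_1,\dots,s_i$ with $s_i$ appearing linearly for the first time in $a_{2i-1}$, and then compares with the corresponding expansion for $D_{\overline{\bs h}}$ to force $s_i(\bs h)\in\mathbb R$ inductively. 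Your approach instead treats $F$ globally as an entire function of exponential type, uses Hadamard factorisation to get $F=c\,\widetilde F$, and reads off $\{h_j\}=\{\overline{h_j}\}$ from the highest-frequency polynomial coefficient. This is cleaner and avoids the somewhat laborious recursive determination of the $a_{2i-1}$; on the other hand the paper's method uses only the simple eigenvalues near $n\uppi$ and hence, as the authors note, never needs that multiplicities of conjugate eigenvalues agree, whereas your argument does need the zero multisets of $F$ and $\widetilde F$ to coincide with multiplicity.

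A few points to tidy up. First, $F$ has $N$ trigonometric factors, so the extreme exponentials are $e^{\pm iNk\ell}$, not $e^{\pm i(N+1)k\ell}$; the coefficient of $e^{iNk\ell}$ is $\frac{-iN}{2^{N}}\prod_j(k-ih_j)$, and your $G_m$ should carry $\cos^{N-m-1}$ rather than $\cos^{N-m}$ (compare the paper's $f_k$). Second, your ``Dirichlet eigenvalue'' worry is a red herring: your $F$ is exactly $-D_{\bs h}$, the determinant of the full $N\times N$ linear system in the $A_i$'s, so it already detects the eigenfunctions vanishing at the centre; nothing was lost in clearing denominators. What you \emph{do} need to justify is that the order of a zero of $F$ equals the algebraic multiplicity of the corresponding eigenvalue of $L_{\bs h}$, so that reflection symmetry of the spectrum with multiplicities really translates into equality of zero divisors of $F$ and $\widetilde F$; this is standard for characteristic determinants of regular boundary-value problems but should be stated. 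Finally, for the step ``equality of indicators forces $\alpha=0$'' it is cleanest to argue directly: on the real axis both $F$ and $\widetilde F$ grow only polynomially, forcing $\mathrm{Re}\,\alpha=0$, and along the imaginary axis the leading terms $\frac{\pm iN}{2^N}\prod_j(k\mp ih_j)\,e^{\mp iNk\ell}$ of $F$ and $\widetilde F$ have the same degree and leading coefficient, forcing $\mathrm{Im}\,\alpha=0$.
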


Spectral theory of non-self-adjoint operators on the star-graph has already been discussed by one of the co-authors\cite{astudillo}. The main difference to the current paper is that
Neumann conditions were assumed at the degree-one vertices, while one considered rotation-invariant conditions at the central vertex. Operators with the Robin
conditions at degree-one vertices were considered by M.\.Znojil\cite{znojil1,znojil2}. All these three papers are devoted to the
 problem of proving that the spectrum of the operator
possesses reflection symmetry (with respect to the real or imaginary axis), provided the vertex conditions have a certain special form. Our interest is in proving the opposite
statement.

The paper is organized as follows. In Section \ref{sec:statement} we describe the model and discuss its elementary spectral properties.
Section \ref{sec:sufficientcondition} is devoted to an easier direction of Theorem \ref{thm:2}, proving if the set of Robin parameters $ \{ h_i \}_{i=1}^N $ 
is invariant under conjugation, then the spectrum has reflection symmetry with respect to the real axis. 
The opposite (harder) statement is proven in the Section \ref{sec:necessarycondition}. 
The last section is devoted to the proof of the Main Theorem \ref{thm:maintheorm}. The proofs of elementary Propositions \ref{prop:symmetry} and \ref{prop:symmetricpolynomial}
can be found  in the Appendix.

\section{Description of the Model and Definitions}\label{sec:statement}
In this section we define rigorously the Laplace operator on the equilateral star-graph $\Gamma$, depicted in Figure~\ref{fig:stargraph}, with complex Robin parameters at the degree one vertices.
The operator we define should be consistent with the geometric picture -- the vertex conditions should connect together values of the functions at each vertex separately.
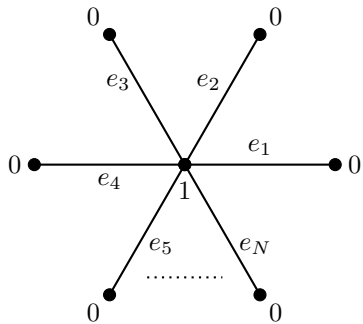
\begin{figure}[H]
\begin{center}
\begin{tikzpicture}
	\draw [thick] (-2,0)--(2,0);
	\draw [thick] (1,-1.73205)--(-1,1.73205);
	\draw [thick] (-1,-1.73205)--(1,1.73205);
	\draw [fill] (1,-1.73205) circle [radius=0.08];
	\draw [fill] (1,1.73205) circle [radius=0.08];
	\draw [fill] (0,0) circle [radius=0.08];
	\draw [fill] (2,0) circle [radius=0.08];
	\draw [fill] (-1,1.73205) circle [radius=0.08];
	\draw [fill] (-1,-1.73205) circle [radius=0.08];
	\draw [fill] (-2,0) circle [radius=0.08];
	\draw [fill] (0,0) circle [radius=0.08];
	\node [below] at (0,-0.1) {$1$};
	\node [right] at (2.05,0) {$0$};
	\node [left] at (-2.05,0) {$0$};
	\node [above left] at (-1,1.73205) {$0$};
	\node [below right] at (1,-1.73205) {$0$};
	\node [above right] at (1,1.73205) {$0$};
	\node [below left] at (-1,-1.73205) {$0$};
	\node [above] at (1,0) {$e_1$};
	\node [below] at (-1,0) {$e_4$};
	\node [above left] at (-0.6,1.73205/2) {$e_3$};
	\node [below right] at (-.6,-1.73205/2) {$e_5$};
	\node [above left] at (.6,1.73205/2) {$e_2$};
	\node [below right] at (0.6,-1.73205/2) {$e_N$};
	\draw [thick, dotted] (-.5,-1.5)--(.5,-1.5);
\end{tikzpicture}
\end{center}\caption{Star-graph $\Gamma$ with $N$ edges.}\label{fig:stargraph} 
\end{figure}
More explicitly let us assume that the graph $ \Gamma$ is built of $ N $ edges $e_1$ till $e_N$, each of unit length. We identify each edge with a  separate copy 
of the interval $ [0,1] $,
where $ 1 $ corresponds to the unique internal vertex.
Consider the Hilbert space  $\mathscr{H}=\text{L}^2\big((0,1), \mathbb C^N \big) \ni \bs{u}=(u_1,\ldots,u_N) $
 equipped with the inner product:
\begin{align}
	\langle\,\bs{u}\,,\bs{v}\,\rangle:=\sum_{i=1}^{N}\int_{0}^{1}\overline{u}_i(x)v_i(x)\,\d x.
\end{align}
Let $h_1$ till $h_N$, known as \emph{Robin parameters}, be given complex numbers and also let $\bs{h}:=(h_1,\ldots,h_N)$.  Consider the \emph{Laplace} operator $L_{\bs{h}}=-\d^2/\d x^2$ with the domain $\dom(L_{\bs{h}})$, consisting of all  complex-valued functions in the Sobolev space $W_2^2 \big( (0,1), \mathbb C^N \big) $ fulfilling the following vertex conditions:
\begin{itemize}
	\item \emph{standard} vertex conditions at the internal vertex:
	\begin{align}
	u_1(1)=u_2(1)=\cdots=u_N(1),\label{eqn:continuity}\\
	u'_1(1)+u'_2(1)+\cdots+ u'_N(1)=0;\label{eqn:normalderivative}
\end{align}
    \item \emph{complex Robin} conditions at the external vertices:
    \begin{align}\label{eqn:robin}
	u'_j(0)=h_j\,u_j(0),\qquad 1\leqslant j\leqslant N.
\end{align}
\end{itemize}

The reader should note that the operator $L_{\bs{h}}$ is not necessarily self-adjoint. This stems from the fact that complex values for Robin parameters are allowed. To see this, let $\bs{u}$ and $\bs{v}$ be two elements of $\dom(L_{\bs{h}})$.  Using integration by parts, we have:
\begin{align*}
	\langle\,L_{\bs{h}}\bs{u}\,,\bs{v}\,\rangle-\langle\,\bs{u}\,,L_{\bs{h}}\bs{v}\,\rangle&=\sum_{i=1}^{N}\bigg(-\int_{0}^{1}\overline{u}''_i(x)v_i(x)\,\d x+\int_{0}^{1}\overline{u}_i(x)v''_i(x)\,\d x\bigg)\\&=
	\sum_{i=1}^{N}\Big(-\overline{u}'_i(x)v_i(x)\big|_{0}^{1}+\overline{u}'_i(x)v_i(x)\big|_{0}^{1}\Big)\\&=
	\sum_{i=1}^{N}(\overline{h}_i-h_i)\overline{u}(0)v_i(0)+\sum_{i=1}^{N}\big[\overline{u}_i(1)v'_i(1)-\overline{u}'_i(1)v_i(1)\big], 
\end{align*}
where Equations~\eqref{eqn:robin} are used. From Equations~\eqref{eqn:continuity} and \eqref{eqn:normalderivative}, one can see that the second sum above vanishes and we get:
\begin{align}
	\langle\,L_{\bs{h}}\bs{u}\,,\bs{v}\,\rangle-\langle\,\bs{u}\,,L_{\bs{h}}\bs{v}\,\rangle=\sum_{i=1}^{N}(\overline{h}_i-h_i)\overline{u}_i(0)v_i(0),
\end{align}
which does not in general vanish, since we allow for complex Robin parameters. Therefore, the spectrum of $L_{\bs{h}}$ might contain complex values.

In what follows, we use a boldface letter to denote an $N$-tuple and  the corresponding Roman letter with subscripts, running from $1$ till $N$, to denote its components. For instance, $\bs{c}$ is used to denote the $N$-tuple $(c_1,\ldots,c_N)$. In this context, we use  $\overline{\bs{c}}$ to indicate $(\overline{c}_1,\ldots,\overline{c}_N)$ and use  $\bs{c}_i$, $1\leqslant i\leqslant N$, to indicate the $(N-1)$-tuple obtained from $\bs{c}$ by suppressing its $i$th component. \noindent For example,  if $\bs{c}=(-\i,1,3\i,-3\i,\i)$, then $\overline{\bs{c}}=(\i,1,-3\i,3\i,-\i)$ and $\bs{c}_3=(-\i,1,-3\i,\i)$.
\begin{definition}
	An $N$-tuple $\bs{c}$ is called \emph{invariant under conjugation}, if non-real components of this tuple can be grouped in conjugate pairs.
\end{definition}
\noindent For example, $\bs{c}$ introduced above is invariant under conjugation but $\bs{d}=(1,\i,\i,-\i)$ is not.

\begin{definition}\label{dfn:time-reversal}
	The \emph{time-reversal} operator $\mathcal{T}$ is an operator defined on  $\mathscr{H}$ by
\begin{align}\label{eqn:Treversal}
	\mathcal{T}\bs{u}=\overline{\bs{u}}.
\end{align} 
This is an anti-linear operator.
\end{definition}
\begin{definition}\label{dfn:parity}
    Let $i$ and $j$ be integers and $1\leqslant i\leqslant j\leqslant N$. A linear operator  $\mathcal{P}_{i,j}$ acting on $\mathscr{H}$ as follows
\begin{align}
	(\mathcal{P}_{i,j}\bs{u})_k=
	\left\{\hspace{-.2cm}
	\begin{array}{rl}
		u_k,&k\not= i, j\\
		u_j,&k=i\\
		u_i,&k=j
	\end{array}
    \right.,  \; \; 1\leqslant k\leqslant N
\end{align}
is called a \emph{permutation} operator. Note that, for any $1\leqslant i\leqslant N$, $\mathcal{P}_{i,i}$ is the identity operator $\id_{\mathscr{H}}$.
\end{definition}
\noindent Permutations $\mathcal{P}_{i,j}$ are elements of the automorphism group for the metric star-graph $\Gamma$. This group naturally induces
 a group of unitary transformations (symmetries) in the Hilbert space $\mathscr{H}$:
\begin{align}
	\mathcal{S}f(x)=f(\mathcal{S}^{-1}x)
\end{align}
for any $ \mathcal S $ from the graph automorphism group. 
If all $h_i$'s are equal then the quantum graph, namely, the operator $L_{\bs{h}}$ possesses the same symmetry group as $ \Gamma$. This is not the case if $ h_i $'s are different.
 
\begin{definition}\label{dfn:S-symmetric}
	A unitary operator $\mathcal{S}$ defined on $\mathscr{H}$ is called a \emph{symmetry} of the quantum graph $L_{\bs{h}}$, if $\mathcal{S}L_{\bs{h}}=L_{\bs{h}}\mathcal{S}$. In this case, the quantum graph  is called $\mathcal{S}$-\emph{symmetric}. For $\mathcal{S}$ to be a symmetry of $L_{\bs{h}}$, it is necessary that $\mathcal{S}\big(\dom(L_{\bs{h}})\big)\subseteq\dom(L_{\bs{h}})$. 
\end{definition}

\section{Sufficient Condition for Theorem \ref{thm:2}}\label{sec:sufficientcondition}

In this section we prove that the invariance of the $N$-tuple  $\bs{h}=(h_1,\ldots,h_N)$ under conjugation implies that the spectrum of $L_{\bs{h}}$ possesses reflection symmetry with respect to the real axis. To 
establish this, one needs the following proposition whose proof is given in Appendix \ref{app:proposition}.

\begin{prop}\label{prop:symmetry}
	Let $\mathcal{AT}$, where $\mathcal{A}$ is an invertible linear operator defined on $ {H}$ and $\mathcal{T}$ is the time-reversal operator, be a symmetry for a linear operator $ L $ acting in a Hilbert space $  H $, namely, $\mathcal{AT}L=L \mathcal{AT}$. Then if $\lambda$ is an eigenvalue of $L $ with degeneracy $d$, then $\overline{\lambda}$ is also an eigenvalue of $L$ with the same degeneracy $d$.
\end{prop}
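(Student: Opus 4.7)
The proof will rest on two elementary observations: first, that $\mathcal{AT}$ is an invertible \emph{anti-linear} map (since $\mathcal{A}$ is linear and invertible while $\mathcal{T}$ is the anti-linear involution of componentwise complex conjugation); and second, that conjugating the eigenvalue equation by such an anti-linear symmetry introduces complex conjugation on the spectral parameter.

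The plan is as follows. Suppose $\psi \in H$ satisfies $L\psi = \lambda \psi$. I apply $\mathcal{AT}$ to both sides. On the right-hand side, anti-linearity gives $\mathcal{AT}(\lambda \psi) = \overline{\lambda}\,\mathcal{AT}\psi$, while on the left-hand side the assumed intertwining $\mathcal{AT}L = L\mathcal{AT}$ yields $L(\mathcal{AT}\psi)$. Combining them,
\begin{equation*}
L(\mathcal{AT}\psi) = \overline{\lambda}\,(\mathcal{AT}\psi),
\end{equation*}
so $\mathcal{AT}\psi$ is an eigenvector of $L$ for the eigenvalue $\overline{\lambda}$, \emph{provided} $\mathcal{AT}\psi \neq 0$. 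Non-vanishing is immediate from invertibility: $\mathcal{T}$ is a bijection of $H$ and $\mathcal{A}$ is invertible by hypothesis, so $\mathcal{AT}\psi = 0$ forces $\psi = 0$.

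To upgrade this to equality of degeneracies, I would fix a basis $\psi_1,\dots,\psi_d$ of the $\lambda$-eigenspace and prove that the images $\mathcal{AT}\psi_1,\dots,\mathcal{AT}\psi_d$ are linearly independent eigenvectors for $\overline{\lambda}$. Indeed, if $\sum_i c_i \,\mathcal{AT}\psi_i = 0$, then applying the inverse $(\mathcal{AT})^{-1} = \mathcal{T}\mathcal{A}^{-1}$, which is again anti-linear, gives
\begin{equation*}
\sum_{i=1}^{d}\overline{c_i}\,\psi_i \;=\; (\mathcal{AT})^{-1}\!\Bigl(\sum_{i=1}^{d} c_i\,\mathcal{AT}\psi_i\Bigr) \;=\; 0,
\end{equation*}
forcing all $c_i = 0$. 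Hence $\dim\ker(L - \overline{\lambda}) \geqslant d$. The reverse inequality follows by running the identical argument starting from a basis of the $\overline{\lambda}$-eigenspace and using the symmetry $\mathcal{AT}L = L\mathcal{AT}$ once more (with $\overline{\lambda}$ replacing $\lambda$), so equality of degeneracies is forced.

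There is no substantive obstacle here; the one point that must not be glossed over is the interplay between anti-linearity and linear independence, which is why I pass through the inverse $(\mathcal{AT})^{-1}$ rather than attempting to manipulate the relation $\sum_i c_i\,\mathcal{AT}\psi_i = 0$ directly. Everything else is a mechanical check, and the argument does not require $L$ to be self-adjoint, normal, or to have discrete spectrum.
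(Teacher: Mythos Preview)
Your proof is correct and follows essentially the same route as the paper's: apply the anti-linear symmetry $\mathcal{AT}$ to a basis of the $\lambda$-eigenspace to obtain eigenvectors for $\overline{\lambda}$, and verify linear independence using invertibility. In fact you are slightly more careful than the paper, which only shows $\dim\ker(L-\overline{\lambda})\geqslant d$ and leaves the reverse inequality implicit; your explicit symmetry argument swapping $\lambda$ and $\overline{\lambda}$ closes that gap.
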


\begin{lemma}\label{lma:PTsymmetric}
	If the $N$-tuple $\bs{h}=(h_1,\ldots,h_N)$, consisting of Robin parameters, is invariant under conjugation, then there exists an automorphism  $\mathcal{P}$ of the graph $ \Gamma $ and hence an invertible linear operator on $\mathscr{H}$ such that the mentioned quantum graph  $L_{\bs{h}}$ is $\mathcal{PT}$-symmetric, that is, $\mathcal{PT}L_{\bs{h}}=L_{\bs{h}}\mathcal{PT}$.
\end{lemma}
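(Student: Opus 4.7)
The plan is to construct the automorphism $\mathcal{P}$ explicitly from the pairing of the Robin parameters, and then verify the intertwining $\mathcal{PT}L_{\bs{h}}=L_{\bs{h}}\mathcal{PT}$ by checking domain invariance and commutation with $-\d^2/\d x^2$ separately.

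\textbf{Step 1: Construction of $\mathcal{P}$.}
Since $\bs{h}$ is invariant under conjugation, I can partition the index set $\{1,\ldots,N\}$ into two disjoint subsets: a subset $R$ of indices $i$ for which $h_i\in\mathbb{R}$, and a subset of indices that can be grouped into disjoint pairs $\{i_k,j_k\}$ with $h_{j_k}=\overline{h_{i_k}}$ and $h_{i_k}\notin\mathbb{R}$. I define $\mathcal{P}$ as the composition of the commuting transpositions $\mathcal{P}_{i_k,j_k}$; equivalently, $\mathcal{P}$ corresponds to the involutive permutation $\sigma$ of edge indices with $\sigma(i_k)=j_k$, $\sigma(j_k)=i_k$, and $\sigma(i)=i$ for $i\in R$. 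By construction, $h_{\sigma(i)}=\overline{h_i}$ for every $i$. Since all edges of $\Gamma$ meet at the single internal vertex and have equal length, any permutation of edges is an automorphism of the metric graph, so $\mathcal{P}$ is indeed a graph automorphism and lifts to a unitary operator on $\mathscr{H}$ acting by $(\mathcal{P}\bs{u})_k=u_{\sigma(k)}$.

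\textbf{Step 2: Invariance of the domain under $\mathcal{PT}$.}
Take $\bs{u}\in\dom(L_{\bs{h}})$ and set $\bs{v}:=\mathcal{PT}\bs{u}$, so $v_k(x)=\overline{u_{\sigma(k)}(x)}$. Regularity ($W^2_2$ on each edge) is clearly preserved. The standard conditions \eqref{eqn:continuity}--\eqref{eqn:normalderivative} are symmetric under any permutation of the components and under componentwise conjugation, so they are preserved. For the Robin conditions, I compute
\begin{equation*}
v_k'(0)=\overline{u'_{\sigma(k)}(0)}=\overline{h_{\sigma(k)}\,u_{\sigma(k)}(0)}=\overline{h_{\sigma(k)}}\,\overline{u_{\sigma(k)}(0)}=h_k\,v_k(0),
\end{equation*}
where the last equality uses $\overline{h_{\sigma(k)}}=h_k$ from Step~1. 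Hence $\bs{v}\in\dom(L_{\bs{h}})$, so $\mathcal{PT}(\dom(L_{\bs{h}}))\subseteq\dom(L_{\bs{h}})$.

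\textbf{Step 3: Commutation with the differential expression.}
On each edge $-\d^2/\d x^2$ commutes with complex conjugation, so $\mathcal{T}(-\d^2/\d x^2)\bs{u}=(-\d^2/\d x^2)\mathcal{T}\bs{u}$; relabeling components also commutes with differentiation, so $\mathcal{P}(-\d^2/\d x^2)=(-\d^2/\d x^2)\mathcal{P}$. Combining these yields $\mathcal{PT}L_{\bs{h}}\bs{u}=L_{\bs{h}}\mathcal{PT}\bs{u}$ for all $\bs{u}\in\dom(L_{\bs{h}})$, which is the desired symmetry.

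\textbf{Main obstacle.}
The only real content is the Robin calculation in Step~2, and its success is guaranteed purely by the defining relation $h_{\sigma(i)}=\overline{h_i}$ produced in Step~1; once this pairing is exhibited, everything else is bookkeeping. The genuinely non-trivial direction of Theorem~\ref{thm:2}, i.e., recovering such a pairing from reflection symmetry of the spectrum alone, is deferred to Section~\ref{sec:necessarycondition} and is not needed here.
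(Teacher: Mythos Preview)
Your proof is correct and follows essentially the same approach as the paper: both construct $\mathcal{P}$ as a product of disjoint transpositions pairing up conjugate Robin parameters, verify domain invariance (with the Robin check being the only substantive step), and note that the differential expression commutes with permutation and conjugation. Your use of the permutation $\sigma$ with the unified relation $h_{\sigma(i)}=\overline{h_i}$ streamlines the Robin verification into a single line, whereas the paper carries out the same computation via an explicit case split on the index $k$, but the underlying argument is identical.
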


\begin{proof}
	If all components of $\bs{h}$ are real, then it is straightforward to see that, say, $\mathcal{P}_{1,1}\mathcal{T}=\id_{\mathscr{H}}\mathcal{T}$ is a symmetry of the quantum graph. Therefore, suppose that this is not the case and, since $\bs{h}$ is assumed to be invariant under conjugation, suppose that $\bs{h}$ has at least two non-real components.  Let $2m$, for some integer $m$ where $1\leqslant m\leqslant\lfloor N/2\rfloor$,  be the number of non-real components of $\bs{h}$.  Without loss of generality, one can assume that the components of $\bs{h}$ are coming in an order in which $h_{2j}=\overline{h}_{2j-1}$, for all integers $1\leqslant j\leqslant m$, and all other components (if any) are real.
	
	We now show that $\mathcal{PT}$, where  $\mathcal{P}:=\prod_{j=1}^{m}\mathcal{P}_{2j-1,2j}$, is a symmetry of $L_{\bs{h}}$. Expressed in words, $\mathcal{P}$ is the operator that interchanges the $(2j-1)$th and $(2j)$th ($1\leqslant j\leqslant m$) components of the $N$-tuple it is acting on, and leaves the rest (if any) of the components unchanged.
	
	First, we show that $\dom(L_{\bs{h}})$ is invariant under the action of $\mathcal{PT}$. Let $\bs{u}$ be in $\dom(L_{\bs{h}})$ and let $\bs{v}=\mathcal{PT}\bs{u}$. Then, for every integer $j$ in the interval $[1,m]$, $v_{2j-1}=\overline{u}_{2j}$ and $v_{2j}=\overline{u}_{2j-1}$ and, for every integer $j$ in the interval $[2m+1,N]$,  $v_j=\overline{u}_j$. Therefore, Equations~\eqref{eqn:continuity} and \eqref{eqn:normalderivative} are fulfilled for components of $\bs{v}$, since the corresponding equations for $\bs{v}$ components are complex conjugates of, at most, a rearrangement of their $\bs{u}$-component counterparts. To see that Robin vertex conditions are also satisfied for $\bs{v}$ components, we show that $v'_k(0)=h_k\,v_k(0)$, for every integer $k$ in the interval $[1,N]$. 
	
	If $k=2j-1$, for some integer $j$ in the interval $[1,m]$, then $v_k=\overline{u}_{k+1}$ and $h_k=\overline{h}_{k+1}$. Hence $v'_k(0)=h_k\,v_k(0)$ if and only if $\overline{u}'_{k+1}(0)=\overline{h}_{k+1}\,\overline{u}_{k+1}(0)$ or, equivalently,  $u'_{k+1}(0)=h_{k+1}\,u_{k+1}(0)$. For even integers $k$ in this interval, the proof is similar. If $k$ is an integer in $[2m+1,N]$, provided that this interval is non-empty, then $v_k=\overline{u}_k$ and $h_k=\overline{h}_k$. Thus $v'_k(0)=h_k\,v_k(0)$ is equivalent to $u'_k(0)=h_k\,u_k(0)$ in this case. 
	
	Checking that $L_{\bs{h}}\mathcal{PT}\bs{u}=\mathcal{PT}L_{\bs{h}}\bs{u}$, for any $\bs{u}$ in the domain of $L_{\bs{h}}$, is striaghtforward.
\end{proof}

The fact that the invariance of the $N$-tuple of Robin parameters under complex conjugation implies
reflection symmetry of the spectrum, follows simply as a combination of Proposition~\ref{prop:symmetry} with Lemma~\ref{lma:PTsymmetric}.

\section{Necessary Condition  for Theorem \ref{thm:2} }\label{sec:necessarycondition}
In this section we assume that the spectrum of $L_{\bs{h}}$ is invariant under complex conjugations 
 and we prove that $\bs{h}$ is invariant under conjugation. To this end, we start by determining the secular equation  for $L_{\bs{h}}$.  
\subsection{The Secular Equation}

Let $\lambda:=z^2$ be a \emph{non-zero} eigenvalue of $L_{\bs{h}}$.   Hence, there exists a non-zero $\bs{u}$ in $\dom(L_{\bs{h}})$ such that
\begin{align*}
	L_{\bs{h}}\bs{u}=z^2\,\bs{u}.
\end{align*}
In particular, after writing it in component form, we have:
\begin{align}\label{eqn:eigenvalue}
	-\frac{\d^2}{\d x^2}u_i(x)=z^2\,u_i(x),\qquad i=1,2,\ldots,N.
\end{align}
The general solutions of Equations~\eqref{eqn:eigenvalue}, for non-zero $z$, have the following form:
\begin{align}\label{eqn:generalsolution}
	u_i(x)=A_i\,\cos zx+B_i\,\sin zx,\qquad i=1,2,\ldots,N,
\end{align}
where $A_i$'s and $B_i$'s are complex constants. 
Taking into account the vertex conditions \eqref{eqn:robin} on them, we obtain
$ B_i=(h_iA_i)/z$  and
\begin{equation} \label{alpha}
u_i(x)=\frac{A_i}{z}  \underbrace{ \left(z \cos zx+ h_i\,\sin zx \right) }_{\displaystyle := \alpha_{h_i} (z)} . 
\end{equation}
Taking into account the vertex conditions \eqref{eqn:continuity} on these functions, one gets:
\begin{align}\label{eq:008}
	\alpha_{h_1}(z)\,A_1=\alpha_{h_2}(z)\,A_2=\cdots=\alpha_{h_N}(z)\,A_N,
\end{align} 
where $\alpha_{h_i}(z) $ were introduced in (\ref{alpha}).
Vertex condition \eqref{eqn:normalderivative} give rise to the following equation:
\begin{align}\label{eq:010}
	\beta_{h_1}(z)\,A_1+\beta_{h_2}(z)\,A_2+\cdots+\beta_{h_N}(z)\,A_N=0,
\end{align}
where
\begin{align}
	\beta_{h_i}(z):=-z\sin z+h_i\,\cos z,\qquad(i=1,2,\ldots,N).
\end{align}

Equations~\eqref{eq:008} along with Equation~\eqref{eq:010} form a system of $N$ linear equations for $N$ unknowns $A_1$ till $A_N$. For this system to have a non-trivial solution, it is necessary and sufficient that the determinant of the coefficients vanishes, namely,
\begin{align}\label{eqn:secularequations}
	D_{\bs{h}}(z)=0,
\end{align}
where
\begin{align}
    D_{\bs{h}}(z):=\left|
	\begin{matrix}
		\alpha_{h_1}(z)&-\alpha_{h_2}(z)&0&0&\cdots&0&0\\
		0&\alpha_{h_2}(z)&-\alpha_{h_3}(z)&0&\cdots&0&0\\
		0&0&\alpha_{h_3}(z)&-\alpha_{h_4}(z)&\cdots&0&0\\
		0&0&0&\alpha_{h_4}(z)&\ddots&0&0\\
		\vdots&\vdots&\vdots&\vdots&\ddots&\vdots&\vdots\\
		0&0&0&0&\cdots&\alpha_{h_{N-1}}(z)&-\alpha_{h_N}(z)\\
		\beta_{h_1}(z)&\beta_{h_2}(z)&\beta_{h_3}(z)&\beta_{h_4}(z)&\cdots&\beta_{h_{N-1}}(z)&\beta_{h_N}(z)
	\end{matrix}
	\right|.
\end{align}
Equation~\eqref{eqn:secularequations} is the \emph{secular} equation of $L_{\bs{h}}$. Expanding the determinant above along the last row and noting that each of the $N$ resultant determinants is in triangular form, one gets the following relation for $D_{\bs{h}}(z)$:

\begin{align}\label{eqn:D}
	D_{\bs{h}}(z)=\sum_{i=1}^N\Bigg(\beta_{h_i}(z)\prod_{\substack{j=1\\j\not=i}}^{N}\alpha_{h_j}(z)\Bigg).
\end{align}
The function $ D_{\bs{h}}(z) $ is an entire function, since the functions $\alpha$ and $\beta$  are entire. Secondly, 
\begin{align}\label{eqn:Dconjugate}
	\overline{D_{\bs{h}}(\overline{z})}=D_{\overline{\bs{h}}}(z),
\end{align}
since $\overline{\alpha_{h_i}(\overline{z})}=\alpha_{\overline{h}_i}(z)$ and $\overline{\beta_{h_i}(\overline{z})}=\beta_{\overline{h}_i}(z)$.

\subsection{On the Roots of the Secular Equation}
When all the Robin parameters are zero, that is, when $\bs{h}=\bs{0}$, Equation~\eqref{eqn:D} reduces to
\begin{align}\label{eqn:D0}
	D_{\bs{0}}(z)=-N\,z^N&\sin z(\cos z)^{N-1}. 
\end{align}
The non-zero roots of $D_{\bs{0}}(z)$ then are $n\,\uppi$ (for any non-zero integer $n$), each  of multiplicity one, and $n\,\uppi+\uppi/2$ (for any integer $n$), each of multiplicity $N-1$. We concentrate ourselves here on the former set of roots, which are easier to handle.  Now consider the case in which $\bs{h}\not=\bs{0}$, and let $\widetilde{z}_n(\bs{h})$, for some fixed positive integer $n$,  be a root of the secular equation $D_{\bs{h}}(z)=0$. Then we write $\widetilde{z}_n(\bs{h})$  in the following form:
\begin{align}\label{eqn:z_n}
	\widetilde{z}_n(\bs{h})=n\,\uppi+\mathit{\Delta}_n(\bs{h}),
\end{align}
where $\mathit{\Delta}_n(\bs{h})$ denotes the deviation of $\widetilde{z}_n(\bs{h})$, as a root  of equation $D_{\bs{h}}(z)=0$, from $n\,\uppi$. 
 We show that, for sufficiently large $n$,  the correction term $\mathit{\Delta}_n(\bs{h})$ takes the following form:
\begin{align}\label{eqn:delt_n}
	\mathit{\Delta}_n(\bs{h})=  \frac{a_1(\bs{h})}{n}+\frac{a_3(\bs{h})}{n^3}+\frac{a_5(\bs{h})}{n^5}+\cdots\raisepunct{,}
\end{align}
with some coefficients $a_i(\bs{h})$. The general structure of these coefficients  will be determined later.

First we engineer a complex-valued function $K_\epsilon$, in which $\epsilon$ is a positive real parameter,  so that
\begin{align}\label{eqn:implication}
	D_{\bs{h}}(n\,\uppi+z)=0\Leftrightarrow K_{\frac{1}{n\,\uppi}}(z)=0,
\end{align}
for any complex number $z$ and for any positive integer $n$. Employing the elementary formulas:
\begin{align*}
	\sin(n\,\uppi+z)&=(-1)^n\,\sin z,\\
	\cos(n\,\uppi+z)&=(-1)^n\,\cos z,
\end{align*}
it is straightforward to see that 
\begin{align}
	K_{\epsilon}(z):=\sum_{i=1}^{N}\Bigg\{\big[-(1+\epsilon\,z)\sin z+\epsilon h_i\,\cos z\big] \prod_{\substack{j=1\\j\not=i}}^{N}\big[(1+\epsilon\,z)\cos z+\epsilon h_j\,\sin z\big] \Bigg\}\raisepunct{,}
\end{align}
does the job.

We now show that there exists a neighborhood $\mathcal{N}$ of the origin and a positive number $\epsilon_0$ such that for every $\epsilon$ in the interval $(0,\epsilon_0)$, the function $K_\epsilon$ has a unique root $\mathit{\Delta}(\epsilon)$ in $\mathcal{N}$. More importantly, we show that the dependence of $\mathit{\Delta}$ on $\epsilon$ is analytic. Note that $\mathit{\Delta}$ is, in general, a complex-valued function of $\epsilon$.

The equation $K_{\epsilon}(z)=0$ can be considered as a polynomial equation in $\epsilon$:
\begin{align}\label{eqn:g-eqn}
	g_0(z)+g_1(z)\,\epsilon+g_2(z)\,\epsilon^2+\cdots+g_N(z)\,\epsilon^N=0,
\end{align}
in which the coefficient-functions $g_i(z)$ are entire trigonometric functions of $z$. Among all these coefficient-functions, what we actually need is the explicit form of just the first function $g_0(z)$. It is straightforward to see that 
\begin{align*}
	g_0(z)=-N\,\sin z(\cos z)^{N-1}.
\end{align*}
For this function we have
\begin{align*}
	g_0(0)=0,\quad g_0'(0)=-N,
\end{align*}
and, consequently, the entire function $g_0$ is invertible in a neighborhood of the origin, where its first derivative is separated from zero.  Using the inverse of $g_0$, the Equation~\eqref{eqn:g-eqn} can be written in the following manner:
\begin{align}\label{eqn:Gepsilon}
	G_\epsilon(z)=z,
\end{align}
 where
 \begin{align}\label{eqn:Gepsilon-g0}
 	G_\epsilon(z):=-g_0^{-1}\Big(g_1(z)\,\epsilon+g_2(z)\,\epsilon^2+\cdots+g_N(z)\,\epsilon^N \Big).
 \end{align}
Of course, $\epsilon$ in ~\eqref{eqn:Gepsilon-g0} should be chosen sufficiently small.  The reader also notes that for any $z$ in the suitable domain,  $g_0^{-1}(z)=-g_0^{-1}(z)$, since $g_0$ itself is an odd function.
 
 We now have:
\begin{align}\label{eqn:Gprimeepsilon}
 	G_\epsilon'(z)=-\frac{1}{g'_0\Big(g_1(z)\,\epsilon+g_2(z)\,\epsilon^2+\cdots+g_N(z)\,\epsilon^N\Big)}\,\epsilon\big(g'_1(z)+g'_2(z)\,\epsilon+\cdots+g'_N(z)\,\epsilon^{N-1} \big), 
\end{align}
 where prime everywhere denotes the derivative with respect to $z$. As Equation~\eqref{eqn:Gprimeepsilon} shows, $|G_\epsilon'(z)|$ is controlled by $\epsilon$. In particular, there exists  $\epsilon_0>0$ and there exists a neighborhood $\mathcal{N}$ such that  $|G_\epsilon'(z)|<1$, for all $z$ in $\mathcal{N}$ and all $\epsilon$ in the interval $(0,\epsilon_0)$. The fixed-point theorem then implies that the Equation~\eqref{eqn:Gepsilon}, or equivalently, the Equation $K_\epsilon(z)=0$ has a unique root $\mathit{\Delta}(\epsilon)$ in $\mathcal{N}$. To see the analytic dependence of $\mathit{\Delta}$ on $\epsilon$, note that the solution curve $\mathit{\Delta}=\mathit{\Delta}(\epsilon)$ can be viewed as the intersection of the following two analytic manifolds:
 \begin{align}
 	x=G_\epsilon(\mathit{\Delta}),\quad x=\mathit{\Delta},
 \end{align}
 in the three-dimensional space, and therefore it is analytic and one can write:
\begin{align} \label{expD}
	\mathit{\Delta}(\epsilon)=b_0(\bs{h})+b_1(\bs{h})\,\epsilon+b_2(\bs{h})\,\epsilon^2+\cdots\raisepunct{,}
\end{align} 
with some coefficients $b_i(\bs{h})$. The function $ K_\epsilon (z) $ possesses the property $K_{-\epsilon}(-z)=-K_\epsilon(z)$
and, consequently, $\mathit{\Delta} (\epsilon) $ is an odd function of $\epsilon$. Therefore,  all even coefficients in the expansion (\ref{expD}) vanish:
\begin{align} \label{expD2}
	\mathit{\Delta}(\epsilon) = b_1(\bs{h})\,\epsilon+b_3(\bs{h})\,\epsilon^3+\cdots\,\raisepunct{.}
\end{align} 

Now let $n$ be a positive integer such that $1/(n\,\uppi)$ lies in $(0,\epsilon_0)$. Therefore, the unique root $\mathit{\Delta}_{n}(\bs{h})$ of $K_{\frac{1}{n\,\uppi}}$ in $\mathcal{N}$ or, considering \eqref{eqn:implication},  equivalently the unique root of Equation $D_{\bs{h}}(n\,\uppi+z)=0$ in $\mathcal{N}$ is
$\mathit{\Delta}_{n}(\bs{h})=\mathit{\Delta}\big(\frac{1}{n\,\uppi} \big)$, and expansion (\ref{eqn:delt_n}) holds with
\begin{align}
	a_j(\bs{h}):=\frac{1}{\uppi^j}\,b_j(\bs{h}),
\end{align}
for all odd positive integers $j$. 

Our next goal is to discuss formulas for the expansion coefficients, this can be done in terms of
symmetric polynomials in $ h_i $.

\subsection{Elementary Symmetric Polynomials}
In this subsection, we introduce a special kind of symmetric polynomials that allows us to write Equation~\eqref{eqn:D} in a more flexible form, which fits better to our purpose.

\begin{definition}\label{dfn:symmetricpolynomial}
	For a non-negative integer $m$, the $m$th \emph{elementary symmetric polynomial} in $N$ variables $x_1$ till $x_N$ is denoted by $s_m(x_1,\ldots,x_N)$ and it is defined by
\begin{align}\label{eqn:symmetricpolynomial}
	s_m(x_1,\ldots,x_N)=\left\{\hspace{-.2cm}
	\begin{array}{cc}	
	1,&m=0\\[3mm]
	\sum x_{i_1}\!\!\cdots x_{i_m},&1\leqslant m\leqslant N\\[3mm]
	0,&m\geqslant N+1
	\end{array},
    \right. 
\end{align}	
 where the sum above is over all indices $i_1$ till $i_m$ such that $1\leqslant i_1<\cdots<i_m\leqslant N$.
\end{definition}
For instance, the elementary symmetric polynomials in three variables are: 
\begin{align*}
    s_0(x_1,x_2,x_3)&=1,\\
	s_1(x_1,x_2,x_3)&=x_1+x_2+x_3,\\
	s_2(x_1,x_2,x_3)&=x_1x_2+x_2x_3+x_1x_3,\\
	s_3(x_1,x_2,x_3)&=x_1x_2x_3,\\
	s_m(x_1,x_2,x_3)&=0,\qquad(m\geqslant4).
\end{align*}

\noindent It is readily seen that the symmetric polynomial $s_m(x_1,\ldots,x_N)$,  $m\leqslant N$, is  homogeneous of degree $m$ in the sense that for any number $\lambda$,
\begin{align}\label{eqn:homogenity}
	s_m(\lambda\,x_1,\ldots,\lambda\,x_N)=\lambda^m\,s_m(x_1,\ldots,x_N).
\end{align}

In the coming sections, we make use of the properties of elementary symmetric polynomials mentioned in the following proposition, which proof 
can be found in Appendix~\ref{app:proposition}.

\begin{prop}\label{prop:symmetricpolynomial}
	Let $\bs{c}=(c_1,\ldots,c_N)$, and let $i$ and $j$ be integers such that $1\leqslant i\leqslant N$ and   $0\leqslant j\leqslant N-1$. Then
\begin{gather}
    c_i\,s_j(\bs{c}_i)=s_{j+1}(\bs{c})-s_{j+1}(\bs{c}_i)\label{eqn:reccursive},\\
	\sum_{i=1}^{N}s_j(\bs{c}_i)=(N-j)\,s_{j}(\bs{c})\label{eqn:023}.
\end{gather}
\end{prop}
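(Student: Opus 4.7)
The plan is to prove both identities by a direct combinatorial inspection of the monomials that constitute the elementary symmetric polynomials, working straight from Definition~\ref{dfn:symmetricpolynomial}. Both identities are essentially bookkeeping statements about which subsets of $\{c_1,\dots,c_N\}$ of a prescribed size contain, or avoid, a given element $c_i$.

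For Equation~\eqref{eqn:reccursive}, I would partition the monomials appearing in $s_{j+1}(\bs{c})$ according to whether the index $i$ is or is not among their factor indices. Each monomial $c_{i_1}\cdots c_{i_{j+1}}$ with $i_1<\cdots<i_{j+1}$ either contains $c_i$ as a factor — in which case the remaining $j$ factors form a monomial of $s_j(\bs{c}_i)$, so the whole monomial equals $c_i$ times such a monomial — or it does not contain $c_i$, in which case it is, by definition, a monomial of $s_{j+1}(\bs{c}_i)$. Summing both contributions gives
\[
s_{j+1}(\bs{c}) \;=\; c_i\,s_j(\bs{c}_i)+s_{j+1}(\bs{c}_i),
\]
which is exactly \eqref{eqn:reccursive} after rearrangement. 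The degenerate cases ($j=0$, where $s_0\equiv1$, and $j=N-1$, where $s_{j+1}(\bs{c}_i)=0$ by the convention in Definition~\ref{dfn:symmetricpolynomial}) need a one-line check but follow immediately.

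For Equation~\eqref{eqn:023}, I would dualize the counting: fix a monomial $c_{i_1}\cdots c_{i_j}$ of $s_j(\bs{c})$ and count the number of indices $i\in\{1,\dots,N\}$ for which this same monomial appears in $s_j(\bs{c}_i)$. It appears precisely when $i\notin\{i_1,\dots,i_j\}$, i.e.\ for exactly $N-j$ values of $i$. Hence every monomial of $s_j(\bs{c})$ contributes exactly $N-j$ times to the sum $\sum_{i=1}^{N}s_j(\bs{c}_i)$, yielding \eqref{eqn:023}. Again the boundary cases $j=0$ and $j\geqslant N$ are immediate from the convention.

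I do not anticipate any real obstacle: both statements are pure combinatorial identities, and the only thing worth watching is the boundary behavior built into Definition~\ref{dfn:symmetricpolynomial}. As a consistency cross-check one can sum \eqref{eqn:reccursive} over $i$ and combine with \eqref{eqn:023}: this recovers $\sum_{i=1}^{N}c_i\,s_j(\bs{c}_i)=(j+1)\,s_{j+1}(\bs{c})$, which is itself the same counting argument read in the opposite direction (each monomial of $s_{j+1}(\bs{c})$ has $j+1$ ways of singling out one of its factors as $c_i$). This gives confidence that the two identities are mutually coherent and both follow from the same underlying subset-counting principle.
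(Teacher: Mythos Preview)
Your proposal is correct and follows essentially the same combinatorial approach as the paper: for \eqref{eqn:reccursive} the paper also partitions the monomials of $s_{j+1}(\bs{c})$ according to whether the distinguished index occurs (after reducing to $i=N$ by symmetry), which is exactly your argument. Your direct double-counting proof of \eqref{eqn:023} is in fact more explicit than the paper's, which simply asserts that the second identity follows from the first without spelling out the counting; your version fills that gap cleanly.
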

\noindent Moreover, we also make use of the following well-known identity:
\begin{align}\label{eqn:identity}
	\prod_{j=1}^{N}(x+c_j)=\sum_{k=0}^{N}s_k(\bs{c})\,x^{N-k}.
\end{align}

\noindent The following lemma is in some sense the key observation that makes the proof to work.

\begin{lemma}\label{lem:symmetricpolynomials}
An $N$-tuple $\bs{c}=(c_1,\ldots,c_N)$ of complex numbers is invariant under conjugation if and only if, for all $1\leqslant m\leqslant N$, $s_m(\bs{c})$ is real. 
\end{lemma}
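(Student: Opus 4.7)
The plan is to convert the statement about symmetric polynomials into a statement about the roots of a single polynomial, via identity (\ref{eqn:identity}). Consider
\[
Q(x) := \prod_{j=1}^{N}(x - c_j),
\]
whose roots, counted with multiplicity, are exactly the components of $\bs{c}$. Applying (\ref{eqn:identity}) with $c_j$ replaced by $-c_j$ and using homogeneity (\ref{eqn:homogenity}) to pull out the sign, we obtain
\[
Q(x) = \sum_{k=0}^{N} (-1)^{k}\, s_{k}(\bs{c})\, x^{N-k}.
\]
Since $s_{0}(\bs{c}) = 1$ is always real, the coefficients of $Q$ are all real if and only if $s_{m}(\bs{c}) \in \mathbb{R}$ for every $1 \leqslant m \leqslant N$.

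For the forward direction, assume $\bs{c}$ is invariant under conjugation, so that the multiset $\{c_1,\ldots,c_N\}$ coincides with the multiset $\{\overline{c}_1,\ldots,\overline{c}_N\}$. Then complex conjugation of the coefficients of $Q$ is the same as replacing each root $c_j$ by $\overline{c}_j$, which only permutes the factors. Hence $\overline{Q} = Q$, so $Q \in \mathbb{R}[x]$, and by the observation above each $s_m(\bs{c})$ is real.

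For the converse, suppose that $s_m(\bs{c}) \in \mathbb{R}$ for all $1 \leqslant m \leqslant N$. Then $Q \in \mathbb{R}[x]$, and the classical fact that non-real roots of a real polynomial come in complex conjugate pairs of equal multiplicity tells us that the multiset of roots of $Q$ is preserved by conjugation. Grouping the non-real components of $\bs{c}$ into these conjugate pairs and leaving the real components alone yields exactly the definition of being invariant under conjugation.

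There is essentially no obstacle here: the result is just the translation dictionary between elementary symmetric polynomials and coefficients of monic polynomials, combined with the standard fact that $\mathrm{Gal}(\mathbb{C}/\mathbb{R})$ acts on the roots of a real polynomial. The only subtle point worth noting is that the equivalence must be treated at the level of multisets, so that multiplicities of conjugate pairs match — but this is automatic since $Q$ records the roots with multiplicity.
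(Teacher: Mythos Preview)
Your proof is correct and follows essentially the same approach as the paper's. The converse direction is identical; for the forward direction the paper argues term-by-term that conjugation permutes the summands of $s_m(\bs{c})$, whereas you package the same observation as the polynomial identity $\overline{Q}=Q$, which is merely a cosmetic difference.
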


\begin{proof}
First, suppose that $\bs{c}$ is invariant under conjugation and let $m$, $1\leqslant m\leqslant N$, be a given integer. From Equation~\eqref{eqn:symmetricpolynomial}, we have: \label{page:A}
\begin{align*}
	\overline{s_m(c_1,\ldots,c_N)}=\sum \overline{c_{i_1}\!\!\cdots c}_{i_m}.
\end{align*} 
Since $\bs{c}$ is invariant under conjugation, for each term in the sum above $\overline{c_{i_1}\!\!\cdots c}_{i_m}\!\!\!\!=c_{j_1}\!\!\cdots c_{j_m}$, where $1\leqslant j_1<\cdots<j_m\leqslant N$. Of course,  not all of these $j$ indices are necessarily distinct from their corresponding $i$ indices. Thus, each term in this sum is either real or its complex conjugate is also a term in this sum and, hence, $s_m(c_1,\ldots,c_N)$ is real.

Conversely, suppose that $s_m(\bs{c})$'s, $1\leqslant m\leqslant N$, are all real. Consider the following polynomial:\label{page:property of A}
\begin{align}
	P(z):=\prod_{i=1}^{N}(z-c_i).
\end{align}
Using Equations~\eqref{eqn:identity} and \eqref{eqn:homogenity}, $P(z)$ can be written as follows:
\begin{align}
	P(z)=z^n+\sum_{m=1}^{N}(-1)^ms_m(\bs{c})\,z^{N-m},
\end{align}
and, therefore, $P(z)$ is a polynomial with real coefficients. Consequently,  a complex number $z_0$ is a root of $P(z)$ if and only if $\overline{z}_0$ is a root and, since the roots of $P$ are $c_1$, $c_2$, \dots, and $c_N$. This would then mean that $\bs{c}$ is invariant under conjugation.
\end{proof}

\subsection{The Secular Equation, New Guise}
We are now ready to write Equation~\eqref{eqn:D} in terms of elementary symmetric polynomials. Plugging $z\cos z$ for $x$ and  $h_j\,\sin z$ for $c_j$ in  Equation~\eqref{eqn:identity}, and using Equation~\eqref{eqn:homogenity}, one can write:
\begin{align*}
	\smash[b]{\prod_{\substack{j=1\\j\not=i}}^{N}}\alpha_{h_j}(z)&=\smash[b]{\prod_{\substack{j=1\\j\not=i}}^{N}}(z\cos z+h_j\,\sin z)\\&\hspace{1.3cm}=
	\smash[b]{\sum_{k=0}^{N-1}}s_k(h_1\,\sin z,\ldots,h_{i-1}\,\sin z,h_{i+1}\,\sin z,\ldots,h_N\,\sin z)\,(z\cos z)^{N-1-k}\\&\hspace{6.8cm}=
	\sum_{k=0}^{N-1}s_k(\bs{h}_i)\,z^{N-1-k}(\sin z)^k(\cos z)^{N-1-k}.
\end{align*}
Hence,
\begin{align*}
	\beta_{h_i}(z)\prod_{\substack{j=1\\j\not=i}}^{N}\alpha_{h_j}(z)&=(-z\sin z+h_i\,\cos z)\sum_{k=0}^{N-1}s_k(\bs{h}_i)\,z^{N-1-k}(\sin z)^k(\cos z)^{N-1-k},
\end{align*}
and after multiplying the terms, we get:
\begin{align}\label{eqn:024}
	\beta_{h_i}(z)\smash[b]{\prod_{\substack{j=1\\j\not=i}}^{N}}\alpha_{h_j}(z)&=-\smash[b]{\sum_{k=0}^{N-1}}s_k(\bs{h}_i)\,z^{N-k}(\sin z)^{k+1}(\cos z)^{N-1-k}\notag\\&\hspace{2cm}+\sum_{k=0}^{N-1}h_is_k(\bs{h}_i)\,z^{N-k-1}(\sin z)^{k}(\cos z)^{N-k}.
\end{align}
The first sum, after the term corresponding to $k=0$ is separated,  can be written in the following form:
\begin{align}\label{eqn:025}
	z^N\sin z(\cos z)^{N-1}+\sum_{k=1}^{N-1}s_k(\bs{h}_i)\,z^{N-k}(\sin z)^{k+1}(\cos z)^{N-1-k}\raisepunct{.}
\end{align}
Using Equation~\eqref{eqn:reccursive}, the second sum in Equation~\eqref{eqn:024} can be rewritten as follows:
\begin{align}\label{eqn:026}
	\smash[b]{\sum_{k=0}^{N-1}}[s_{k+1}&(\bs{h})-s_{k+1}(\bs{h}_i)]\,z^{N-k-1}(\sin z)^{k}(\cos z)^{N-k}\notag\\&\hspace{-.5cm}=
	\smash[b]{\sum_{k=1}^{N}}[s_{k}(\bs{h})-s_{k}(\bs{h}_i)]\,z^{N-k}(\sin z)^{k-1}(\cos z)^{N-k+1}\notag\\&\hspace{.5cm}=
	\sum_{k=1}^{N-1}\big\{[s_{k}(\bs{h})-s_{k}(\bs{h}_i)]\,z^{N-k}(\sin z)^{k-1}(\cos z)^{N-k+1}\big\}+s_N(\bs{h})\,(\sin z)^{N-1}\cos z,
\end{align}
where in the last line, the $N$th term of the sum is written separately, and $s_N(\bs{h}_i)=0$ has been employed. Plugging Equations~\eqref{eqn:025} and \eqref{eqn:026} back into Equation~\eqref{eqn:024}, we obtain:
\begin{multline}\label{eqn:0028}
	\beta_{h_i}(z)\smash[b]{\prod_{\substack{j=1\\j\not=i}}^{N}}\alpha_{h_j}(z)=-z^N\sin z(\cos z)^{N-1}\\\hspace{-1cm}+\smash[b]{\sum_{k=1}^{N-1}}\big\{-s_k(\bs{h}_i)\,z^{N-k}(\sin z)^{k+1}(\cos z)^{N-k-1}\\\hspace{3cm}+[s_{k}(\bs{h})-s_{k}(\bs{h}_i)]\,z^{N-k}(\sin z)^{k-1}(\cos z)^{N-k+1}\big\}\\+s_N(\bs{h})\,(\sin z)^{N-1}\cos z.
\end{multline}
The summand in the middle term above can be written as:
\begin{align*}
	\big\{[s_k(\bs{h})-s_k(\bs{h}_i)]\,\cos^2z-s_k(\bs{h}_i)\,\sin^2z\big\}\,z^{N-k}(\sin z)^{k-1}(\cos z)^{N-k-1},
\end{align*}
or
\begin{align*}
	[s_k(\bs{h})\,\cos^2z-s_k(\bs{h}_i)]\,z^{N-k}(\sin z)^{k-1}(\cos z)^{N-k-1}.
\end{align*}
Plugging back the last expression into Equation~\eqref{eqn:0028}, yields:
\begin{align*}
	\beta_{h_i}(z)&\smash[b]{\prod_{\substack{j=1\\j\not=i}}^{N}}\alpha_{h_j}(z)=-z^N\sin z(\cos z)^{N-1}\\&\hspace{1cm}+\smash[b]{\sum_{k=1}^{N-1}}s_k(\bs{h})\,z^{N-k}(\sin z)^{k-1}(\cos z)^{N-k+1}\\&\hspace{3cm}-
	\smash[b]{\sum_{k=1}^{N-1}}s_k(\bs{h}_i)\,z^{N-k}(\sin z)^{k-1}(\cos z)^{N-k-1}\\&\hspace{7.5cm}
	+s_N(\bs{h})\,(\sin z)^{N-1}\cos z.
\end{align*}
Finally, summing over $i$ and using Equation~\eqref{eqn:023}, one gets:
\begin{align}
	D_{\bs{h}}(z)=-N\,z^N&\sin z(\cos z)^{N-1}\notag\\&+\,\smash[b]{\sum_{k=1}^{N-1}}\big\{s_k(\bs{h})\,(k-N\,\sin^2z)z^{N-k}(\sin z)^{k-1}(\cos z)^{N-k-1}\big\}\notag\\&\hspace{6cm}+N\,s_N(\bs{h})(\sin z)^{N-1}\cos z,
\end{align}
which can be written more compactly as: 
\begin{align}\label{eqn:secularequation}
	D_{\bs{h}}(z)=-N\,z^N&\sin z(\cos z)^{N-1}+z^N\sum_{k=1}^N\frac{s_k(\bs{h})}{z^k}\,f_k(z),
\end{align}
where
\begin{align}
	f_k(z):=(k-N\,\sin^2z)(\sin z)^{k-1}(\cos z)^{N-k-1},\qquad(k=1,2,\ldots,N).
\end{align}

\subsubsection{A Short Detour}\label{subsub:detour} 
Before continuing the main theme, let us have a short detour here to see how practical the formula  given in Equation~\eqref{eqn:secularequation} for the secular equation actually is. Here we use this formula to demonstrate directly that for large $ n$, eigenvalues of $ L_{\bs{h}} $ can be found close to $n\,\uppi$.
 Assuming that $z\not=0$, Equation~\eqref{eqn:secularequation} can be written as follows: 
\begin{align}\label{eqn:perturbed}
	\frac{D_{\bs{h}}(z)}{z^N}=-N\,\sin z(\cos z)^{N-1}+\sum_{k=1}^N\frac{s_k(\bs{h})}{z^k}\,f_k(z).
\end{align}
For any positive integer $n$, $n\,\uppi$ is a root of the first term on the right hand side of Equation~\eqref{eqn:perturbed}. As Equation~\eqref{eqn:D0} shows, this term actually gives the secular equation when $\bs{h}=\bs{0}$ and $z\not=0$.  

Let $\mathcal{C}_n$ denote a circle of some \emph{small}  but fixed radius $\rho_n$ centered at $n\,\uppi$. We have:
\begin{align}\label{eqn:estimate}
	\Big|\frac{s_k(\bs{h})}{z^k}\Big|\leqslant\binom{N}{k}\bigg(\frac{h}{|z|}\bigg)^k,
\end{align}
where $h:=\max\big\{\,|h_1|,\ldots,|h_N|\,\big\}$. Thus, for any $z$ chosen inside $\mathcal{C}_n$ with large $n$, the expression on the left in  Equation~\eqref{eqn:estimate} becomes small. On the other hand, $f_k$'s are all uniformly bounded inside every $\mathcal{C}_n$. To see this, note that  all $f_k$'s are uniformly bounded inside, say, $\mathcal{C}_1$ and they are periodic with period $2\uppi$. Hence,  the second term on the right of Equation~\eqref{eqn:perturbed} becomes negligible with respect to the first term, for all $z$ inside $\mathcal{C}_n$ with large $n$. Since the second term is continuous inside any of the circles $\mathcal{C}_n$,   
 one then expects the secular equation, when $\bs{h}\not=\bs{0}$, has a root that is arbitrary close to $n\,\uppi$, if $n$ is sufficiently large.
 
\subsection{Structure of the  Coefficients $a_j(\bs{h})$}\label{subsec:coeff} 
 
In this subsection we deal with the practical problem of determining the coefficients $a_j(\bs{h})$, which appear in the expansion of the correction term $\mathit{\Delta}_n(\bs{h})$ in Equation~\eqref{eqn:delt_n}. We remind that all even coefficients in the expansion are zero and we  employ a recursive calculation to determine the general structure of odd  coefficients.

We Taylor expand, around $z=0$,  the sine and cosine functions appearing in $D_{\bs{h}}(z)$ given by Equation~\eqref{eqn:secularequation}  and plug $\widetilde{z}_n(\bs{h})=n\,\uppi+\mathit{\Delta}_n(\bs{h})$, with $\mathit{\Delta}_n(\bs{h})$ given by Equation~\eqref{eqn:delt_n}, into this Taylor's expansion and then collect the resultant series in terms of different powers of $n$. The powers of $n$ that appear in this expansion are:
\begin{align*}
	n^{N-1},n^{N-3}, n^{N-5},n^{N-7},\ldots=\{n^{N-2i+1}\}_{i\in\mathbb{N}=\{0,1,2,3,\ldots\}}
\end{align*}
For $\widetilde{z}_n(\bs{h})$ to be a root of the secular equation, all the coefficients of the powers of $n$ introduced above, must vanish. 
 Equating all coefficients to zero and solving the equations recursively, we get the following result for coefficients $a_i(\bs{h})$ with odd indices:
\begin{align}\label{eqn:acoefficents}
	a_{2i-1}(\bs{h})=s_1(\bs{h})\,F_i\big(s_1(\bs{h}),\ldots,s_{i-1}(\bs{h})\big)+\frac{i}{\uppi^{2i-1}N^i}\,\big(s_1(\bs{h})\big)^{i-1}s_i(\bs{h}),\qquad i=2,3,4,\ldots,
\end{align}
where $F_i$'s are some polynomial of degree $2i$ in $i-1$ variables. The coefficients of $F_i$'s are real-valued rational functions of $N$. For $i=1$, the corresponding function $F_1$ is zero. More explicitly, for instance:
\begin{align}\label{eqn:a1}
	a_1(\bs{h})&=\frac{1}{\uppi N}\,s_1(\bs{h}),
\end{align}
and
\begin{align*}
	a_3(\bs{h})&=s_1(\bs{h})F_1\big((s_1(\bs{h})\big)+\frac{2}{\uppi^3 N^2}\,s_1(\bs{h})s_2(\bs{h}),\\
	a_5(\bs{h})&=s_1(\bs{h})F_2\big(s_1(\bs{h}),s_2(\bs{h})\big)+\frac{3}{\uppi^5N^3}\,\big(s_1(\bs{h})\big)^2s_3(\bs{h}),
\end{align*}
where
\begin{align*}
	F_1(x_1)&=-\frac{3N-2}{3\uppi^3N^3}x_1^2-\frac{1}{\uppi^3N^2}x_1,\\
	F_2(x_1,x_2)&=\frac{10N^2-15N+6}{5\uppi^5N^5}x_1^4+\frac{12N-8}{3\uppi^5N^4}x_1^3-\frac{7N-6}{\uppi^5N^4}x_1^2x_2+\frac{2}{\uppi^5N^3}x_1^2-\frac{8}{\uppi^5N^3}x_1x_2+\frac{4}{\uppi^5N^3}x_2^2.
\end{align*}
A closer look at Equations~\eqref{eqn:acoefficents} reveals that $s_i(\bs{h})$ appears, for the first time, in  the equation for $a_{2i-1}(\bs{h})$ and, moreover, it appears there with power equal to one. This we use to prove the main result of this section. 
\subsection{Necessary Condition, The Proof}
Finally, everything is in order and we finilize the proof for the necessary condition of the main theorem.  In fact, here we  assume that the eigenvalues of $L_{\bs{h}}$ possess reflection symmetry with respect to the real axis and we prove that the $N$-tuple $\bs{h}=(h_1,\ldots,h_N)$, consisting of Robin parameters, is invariant under conjugation.

Since the eigenvalues of $L_{\bs{h}}$ possesses reflection symmetry with respect to the real axis, then $D_{\bs{h}}(z)=0$ if and only if $D_{\bs{h}}(\overline{z})=0$ and, by Equation~\eqref{eqn:Dconjugate}, this is equivalent to $D_{\overline{\bs{h}}}(z)=0$. Therefore, $\widetilde{z}_n(\bs{h})$ introduced in Equation~\eqref{eqn:z_n}, is a root of $D_{\overline{\bs{h}}}(z)$ as well. If one uses the same procedure that gave rise to Equation~\eqref{eqn:acoefficents}, but this time for $D_{\overline{\bs{h}}}(z)$, one gets:
\begin{align}\label{eqn:a'1}
	a_1(\bs{h})&=\frac{1}{\uppi N}\,\overline{s_1(\bs{h})},
\end{align}
and
\begin{align}\label{eqn:a'coefficents}
	a_{2i-1}(\bs{h})=\overline{s_1(\bs{h})}\,F_i\big(\overline{s_1(\bs{h})},\ldots,\overline{s_{i-1}(\bs{h})}\big)+\frac{i}{\uppi^{2i-1}N^i}\,\big(\overline{s_1(\bs{h})}\big)^{i-1}\overline{s_i(\bs{h})},\qquad i=2,3,4,\ldots.
\end{align}
Here we used $s_m(\overline{\bs{h}})=\overline{s_m(\bs{h})}$ for $m=1,\ldots,N$.

Comparing Equations~\eqref{eqn:a1} and \eqref{eqn:a'1}, one finds that $s_1(\bs{h})$ is real. Now assume that $s_1(\bs{h})$ till $s_{k-1}(\bs{h})$, $1\leqslant k\leqslant N$, are real. From Equation~\eqref{eqn:a'coefficents}, we have:
\begin{align*}
	a_{2k-1}(\bs{h})=s_1(\bs{h})\,F_k\big(s_1(\bs{h}),\ldots,s_{k-1}(\bs{h})\big)+\frac{k}{\uppi^{2k-1}N^k}\,\big(s_1(\bs{h})\big)^{k-1}\overline{s_k(\bs{h})},
\end{align*}
and from Equation~\eqref{eqn:acoefficents}, we have
\begin{align*}
	a_{2k-1}(\bs{h})=s_1(\bs{h})\,F_k\big(s_1(\bs{h}),\ldots,s_{k-1}(\bs{h})\big)+\frac{k}{\uppi^{2k-1}N^k}\,\big(s_1(\bs{h})\big)^{k-1}s_k(\bs{h}).
\end{align*}
Thus $s_k(\bs{h})$ is also real. Consequently, all polynomials $s_1(\bs{h})$ till $s_N(\bs{h})$ are real and by Lemma~\ref{lem:symmetricpolynomials}, $\bs{h}$ is invariant under conjugation.

\section{Proof of the Main Theorem \ref{thm:maintheorm} and perspectives}

We need to prove that existence of a certain $ \mathcal{PT}$-symmetry is necessary and sufficient for the spectrum of the operator $ L_{\bs{h}} $
to possess reflection symmetry with respect to the real axis. 
Sufficiency (it is a general property of $ \mathcal{PT}$-symmetric operators) is already proven -- it is given by Proposition \ref{prop:symmetry}. Hence it remains to prove the necessity.
Theorem \ref{thm:2} implies that the spectrum possesses reflection symmetry only if the set of Robin parameters $ \{ h_i \} $ is
invariant under complex conjugation. Then  Lemma \ref{lma:PTsymmetric} implies that there exists an automorphism $ \mathcal P$ of $ \Gamma $
(given by a product of edge permutations) such that the operator $ L_{\bs{h}} $ is $ \mathcal{PT}$-symmetric. It follows that every operator $ L_{\bs{h}} $
with reflection-symmetric spectrum possesses $ \mathcal{PT}$-symmetry with a certain automorphism $ \mathcal P $
given as a product of permutations. The Main Theorem is proven.

We have already mentioned in the Introduction that we doubt that the statement that reflection symmetry of
the spectrum implies $ \mathcal{PT}$-symmetry of the quantum graph operator $ L $ with respect to a certain automorphism $ \mathcal P $
of the underlying metric graph holds in full generality. This would imply in particular that all quantum graphs with reflection symmetric
spectrum necessarily are defined on symmetric metric graphs. It might happen that this statement holds if one restricts consideration to
vertex conditions given only via complex delta interactions. We are planning to return back to this question in one of our future publications. 

Note that in the proof of Theorem \ref{thm:2} and therefore in the proof of the Main Theorem we did not use that the multiplicities of
the conjugated eigenvalues coincide. The reason is that only a small part of the eigenvalues was used. One may show that the
eigenvalues of $ L_{\bs{h}} $ are situated close to the eigenvalues of the standard Laplace operator $ L_{\bs{0}}.$ More precisely,
there is exactly one eigenvalue close to $ \uppi n $ and $ N-1 $ eigenvalues close to $ \uppi (n+1/2). $ In our proof we used
just the series of eigenvalues close to $ \uppi n $ and such a series is invariant under conjugation only if it is real. Such eigenvalues cannot be
degenerate. Generalising our results for
arbitrary graphs would require to take into account that the multiplicities in the conjugated pairs of eigenvalues coincide.

Researchers working with $\mathcal{PT}$-symmetric operators are often interested when the operator is not self-adjoint, but the
spectrum is real anyway. One may study this question for quantum graphs getting explicit examples.

\section*{Acknowledgements}
	B.M.G. would like to thank E. Ardonne, I. Mahyaeh,  and F. \v{S}tampach for discussions. He was sponsored, in part, by the Swedish Research Council. P.K. was partially supported by the Center for
  Interdisciplinary Research (ZiF) in Bielefeld in the framework of
  the cooperation group on ``Discrete and continuous models in the
  theory of networks'' and by the Swedish Research Council grant D0497301.\nopagebreak[4]

\appendix
\section{Proofs of Propositions 1 and 2}\label{app:proposition}
In this appendix we prove Propositions \ref{prop:symmetry} and \ref{prop:symmetricpolynomial}.

\begin{proof} (Proposition \ref{prop:symmetry}) Since $\mathcal{AT}$ is a symmetry of $L$, we have:
	\begin{align}\label{eqn:PTsymmetric}
	L\mathcal{AT}=\mathcal{AT}L.
\end{align}
On the other hand, since $\lambda$ is an eigenvalue of $L$ with degeneracy $d$, there exist $d$ linearly independent functions $\bs{u}^1$, $\bs{u}^2$, \dots, and $\bs{u}^d$ in $\dom(L)$ such that
	\begin{align*}
		L\bs{u}^i=\lambda\,\bs{u}^i,\qquad(i=1,\ldots,d).
	\end{align*}
	Acting both sides of the equation above by $\mathcal{AT}$ and considering that $\mathcal{T}$ is anti-linear and $\mathcal{A}$ is linear, one gets:
	\begin{align*}
		\mathcal{A}\mathcal{T}L\bs{u}_i=\overline{\lambda}\,\mathcal{AT}\bs{u}_i,\qquad(i=1,\ldots,d).
	\end{align*}
	Using Equations~\eqref{eqn:PTsymmetric} and \eqref{eqn:Treversal} in the equations above, one gets:
	\begin{align*}
		L\big(\mathcal{A}\overline{\bs{u}}^i\big)=\overline{\lambda}\,\big(\mathcal{A}\overline{\bs{u}}^i\big),\qquad(i=1,\ldots,d).
	\end{align*}
	Now we show that the functions $\mathcal{A}\overline{\bs{u}}^1$, $\mathcal{A}\overline{\bs{u}}^2$, \dots, and $\mathcal{A}\overline{\bs{u}}^d$ are linearly independent. Let $c_1$, $c_2$, \dots, and $c_N$ be complex numbers such that
	\begin{align*}
		c_1\,(\mathcal{A}\overline{\bs{u}}^1)+c_2\,(\mathcal{A}\overline{\bs{u}}^2)+\cdots+c_N\,(\mathcal{A}\overline{\bs{u}}^N)=0.
	\end{align*}
	Then
	\begin{align*}
		\mathcal{A}(c_1\,\overline{\bs{u}}^1+c_1\,\overline{\bs{u}}^2+\cdots+c_1\,\overline{\bs{u}}^N)=0,
	\end{align*}
	and, since $\mathcal{A}$ is invertible, we have:
	\begin{align*}
		c_1\,\overline{\bs{u}}^1+c_1\,\overline{\bs{u}}^2+\cdots+c_1\,\overline{\bs{u}}^N=0,
	\end{align*}
	or
	\begin{align*}
		\overline{c}_1\,\bs{u}^1+\overline{c}_1\,\bs{u}^2+\cdots+\overline{c}_1\,\bs{u}^N=0.
	\end{align*}
	Therefore, all $c_i$'s vanish, since $\bs{u}^i$'s are linearly independent.
\end{proof}

\begin{proof} (Proposition \ref{prop:symmetricpolynomial}) To prove Equation~\eqref{eqn:reccursive}, because of symmetry, one can assume that $i=N$ without loss of generality.  

In what follows, $\sum^{(1)}$ indicates a sum over all subscripts such that the lowest possible value for subscripts is $1$ and the highest possible value for subscripts is $N$, and $\sum^{(2)}$ indicates the same except that the highest possible value for subscripts is $N-1$ instead.
\begin{align*}
	\text{RHS}&=\mathlarger{\mathlarger{\textstyle{\sum}}}^{(1)}c_{i_1}\!\!\cdots c_{i_{j+1}}-\mathlarger{\mathlarger{\textstyle{\sum}}}^{(2)}c_{i_1}\!\!\cdots c_{i_{j+1}}\\&=\Big(c_N\mathlarger{\mathlarger{\textstyle{\sum}}}^{(2)}c_{i_1}\!\!\cdots c_{i_{j}}+\mathlarger{\mathlarger{\textstyle{\sum}}}^{(2)}c_{i_1}\!\!\cdots c_{i_{j+1}}\Big)-\mathlarger{\mathlarger{\textstyle{\sum}}}^{(2)}c_{i_1}\!\!\cdots c_{i_{j+1}}\\&=c_N\,s_j(\bs{c}_N)\\&=\text{LHS}.
\end{align*}
Equation~\eqref{eqn:reccursive} then follows from Equation~\eqref{eqn:023}. 
\end{proof}


\begin{thebibliography}{100}

\bibitem{astudillo}
{\sc Astudillo, M., Kurasov, P., and Usman, M.}
\newblock $\mathcal{PT}$-symmetric laplace operators on star graphs: Real
  spectrum and self-adjointness.
\newblock {\em Advances in Mathematical Physics 2015\/} (2015).

\bibitem{parzan}
{\sc Band, R., Parzanchevski, O., and Ben-Shach, G.}
\newblock The isospectral fruits of representation theory: quantum graphs and
  drums.
\newblock {\em Journal of Physics A: Mathematical and Theoretical 42}, 17
  (2009), 175202.

\bibitem{MR1627442}
{\sc Bender, C.~M., and Boettcher, S.}
\newblock Real spectra in non-{H}ermitian {H}amiltonians having
  {$\mathcal{PT}$} symmetry.
\newblock {\em Phys. Rev. Lett. 80}, 24 (1998), 5243--5246.

\bibitem{MR1686605}
{\sc Bender, C.~M., Boettcher, S., and Meisinger, P.~N.}
\newblock {$\mathcal P\mathcal T$}-symmetric quantum mechanics.
\newblock {\em J. Math. Phys. 40}, 5 (1999), 2201--2229.

\bibitem{MR3035410}
{\sc Bender, C.~M., DeKieviet, M., and Klevansky, S.~P.}
\newblock {$\mathcal{PT}$} quantum mechanics.
\newblock {\em Philos. Trans. R. Soc. Lond. Ser. A Math. Phys. Eng. Sci. 371},
  1989 (2013), 20120523, 4.

\bibitem{kuchment}
{\sc Berkolaiko, G., and Kuchment, P.}
\newblock {\em Introduction to quantum graphs}.
\newblock No.~186. American Mathematical Soc., 2013.

\bibitem{boman}
{\sc Boman, J., and Kurasov, P.}
\newblock Symmetries of quantum graphs and the inverse scattering problem.
\newblock {\em Advances in Applied Mathematics 35}, 1 (2005), 58--70.

\bibitem{gutkin}
{\sc Gutkin, B., and Smilansky, U.}
\newblock Can one hear the shape of a graph?
\newblock {\em Journal of Physics A: Mathematical and General 34}, 31 (2001),
  6061.

\bibitem{hussein}
{\sc Hussein, A., Krejcirik, D., and Siegl, P.}
\newblock Non-self-adjoint graphs.
\newblock {\em Transactions of the American Mathematical Society 367\/} (2015),
  2921--2957.

\bibitem{pavel2015}
{\sc Kennedy, J.~B., Kurasov, P., Malenov{\'a}, G., and Mugnolo, D.}
\newblock On the spectral gap of a quantum graph.
\newblock In {\em Annales Henri Poincar{\'e}\/} (2015), Springer, pp.~1--35.

\bibitem{wires}
{\sc Kostrykin, V., and Schrader, R.}
\newblock Kirchhoff's rule for quantum wires.
\newblock {\em Journal of Physics A: Mathematical and General 32}, 4 (1999),
  595.

\bibitem{kurasovBirk}
{\sc Kurasov, P.}
\newblock {\em Spectral theory of quantum graphs and inverse problems}.
\newblock to appear in Birkh\"{a}user.

\bibitem{kurasovschrodinger}
{\sc Kurasov, P.}
\newblock Schr{\"o}dinger operators on graphs and geometry i: Essentially
  bounded potentials.
\newblock {\em Journal of Functional Analysis 254}, 4 (2008), 934--953.

\bibitem{kurasov2014rayleigh}
{\sc Kurasov, P., and Naboko, S.}
\newblock Rayleigh estimates for differential operators on graphs.
\newblock {\em J. Spectr. Theory 4}, 2 (2014), 211--219.

\bibitem{MR1958251}
{\sc Mostafazadeh, A.}
\newblock Pseudo-{H}ermiticity and generalized {$PT$}- and {$CPT$}-symmetries.
\newblock {\em J. Math. Phys. 44}, 3 (2003), 974--989.

\bibitem{MR3485039}
{\sc Mostafazadeh, A.}
\newblock Point interactions, metamaterials, and {$\mathcal{PT}$}-symmetry.
\newblock {\em Ann. Physics 368\/} (2016), 56--69.

\bibitem{post2012}
{\sc Post, O.}
\newblock {\em Spectral analysis on graph-like spaces}, vol.~2039.
\newblock Springer Science \& Business Media, 2012.

\bibitem{znojil1}
{\sc Znojil, M.}
\newblock Quantum star-graph analogues of pt-symmetric square wells.
\newblock {\em Canadian Journal of Physics 90}, 12 (2012), 1287--1293.

\bibitem{znojil2}
{\sc Znojil, M.}
\newblock Quantum star-graph analogues of pt-symmetric square wells: Part ii,
  spectra.
\newblock {\em Canadian Journal of Physics 93}, 7 (2014), 765--768.

\end{thebibliography}
\end{document}